\newcolumntype{C}[1]{>{\centering\let\newline\\\arraybackslash\hspace{0pt}}m{#1}}
\newtheorem{theorem}{Theorem}
\newtheorem{proposition}{Proposition}
\theoremstyle{definition}
\newtheorem{corollary}{Corollary}
\theoremstyle{definition}
\theoremstyle{definition}
\newcommand{\F}{\mathbb{F}}
\begin{document}
\title{Notes on Communication and Computation in Secure Distributed Matrix Multiplication}
%
%
%


\author{
   \IEEEauthorblockN{Rafael G. L. D'Oliveira\IEEEauthorrefmark{1},
                     Salim El Rouayheb\IEEEauthorrefmark{2},
                     Daniel Heinlein\IEEEauthorrefmark{3},
                     and David Karpuk\IEEEauthorrefmark{4}}
  \IEEEauthorblockA{\IEEEauthorrefmark{1}
                    RLE,  Massachusetts Institute of Technology, USA,
                    rafaeld@mit.edu}
  \IEEEauthorblockA{\IEEEauthorrefmark{2}
                     ECE, Rutgers University, USA,
                     salim.elrouayheb@rutgers.edu}
   \IEEEauthorblockA{\IEEEauthorrefmark{3}%
                     Department of Communications and Networking, Aalto University, Finland,
                     daniel.heinlein@aalto.fi}
   \IEEEauthorblockA{\IEEEauthorrefmark{4}%
                    F-Secure, Finland,
                    da.karpuk@uniandes.edu.co}
 }

\maketitle

\begin{abstract}
We consider the problem of secure distributed matrix multiplication in which a user wishes to compute the product of two matrices with the assistance of honest but curious servers. 

\noindent In this paper, we answer the following question: Is it beneficial to offload the computations if security is a concern? We answer this question in the affirmative by showing that by adjusting the parameters in a polynomial code we can obtain a trade-off between the user's and the servers' computational time.

\noindent Indeed, we show that if the computational time complexity of an operation in $\mathbb{F}_q$ is at most $\mathcal{Z}_q$ and the computational time complexity of multiplying two $n\times n$ matrices is $\mathcal{O}(n^\omega \mathcal{Z}_q)$ then, by optimizing the trade-off, the user together with the servers can compute the multiplication in $\mathcal{O}(n^{4-\frac{6}{\omega+1}} \mathcal{Z}_q)$ time.

\noindent We also show that if the user is only concerned in optimizing the download rate, a common assumption in the literature, then the problem can be converted into a simple private information retrieval problem by means of a scheme we call Private Oracle Querying. However, this comes at large upload and computational costs for both the user and the servers.
\end{abstract}

\let\thefootnote\relax\footnote{\scriptsize{This work was supported in part by NSF Grant CCF 1817635 and CNS 1801630.}}


\section{Introduction}

There has been a growing interest in applying coding theoretic methods for  Secure Distributed Matrix Multiplication (SDMM) \cite{ravi2018mmult,Kakar2019OnTC,d2019gasp,DOliveira2019DegreeTF, Aliasgari2019DistributedAP,Kakar2019UplinkDownlinkTI,Yu2020EntangledPC}. In SDMM, a user has two matrices, ${A \in \mathbb{F}_q^{r \times s}}$ and $B \in \mathbb{F}_q^{s \times t}$, and is interested in obtaining $AB \in \mathbb{F}_q^{r \times t}$ with the help of $N$ servers without leaking any information about $A$ or $B$ to any server.  All servers are assumed to be honest and responsive, but are curious, in that any $T$ of them may collude to try to deduce information about either $A$ or $B$. The original performance metric used in the literature is the download cost \cite{ravi2018mmult}, i.e. the total amount of data downloaded by the user from the servers, with later work considering the total communication cost\cite{Chang2019OnTU,9004505,Kakar2019UplinkDownlinkTI}.

In \cite{jia2019capacity}, the following existential issue is raised with the SDMM setting: Is it beneficial to offload the computations if security is a concern? Indeed computing the product $AB$ locally is both secure and has zero communication cost. The authors in \cite{jia2019capacity} circumvent this by changing the setting so that the user does not possess the matrices $A$ and $B$. This forces communication to be the only way for the user to obtain the product $AB$. This however, is solving a problem quite different from the one initially posed in \cite{ravi2018mmult}.

In this paper we revisit the original setting of SDMM and show that offloading the computations can be justified from a computational perspective. More precisely, we show that by adjusting the parameters in a polynomial code we can obtain a trade-off between the user's and the servers' computational time, as shown in Figure~\ref{fig:Fa}. Indeed, if the computational time complexity of an operation in $\mathbb{F}_q$ is at most $\mathcal{Z}_q$ and the computational time complexity of multiplying two $n\times n$ matrices is $\mathcal{O}(n^\omega \mathcal{Z}_q)$ then, by optimizing the trade-off, the user together with the servers can compute the multiplication in $\mathcal{O}(n^{4-\frac{6}{\omega+1}} \mathcal{Z}_q)$ time, as shown in Figure~\ref{fig:Fb}.

\begin{figure*}[!t]
  \begin{subfigure}[b]{0.5\linewidth}
  \centering
\includegraphics[width=9cm]{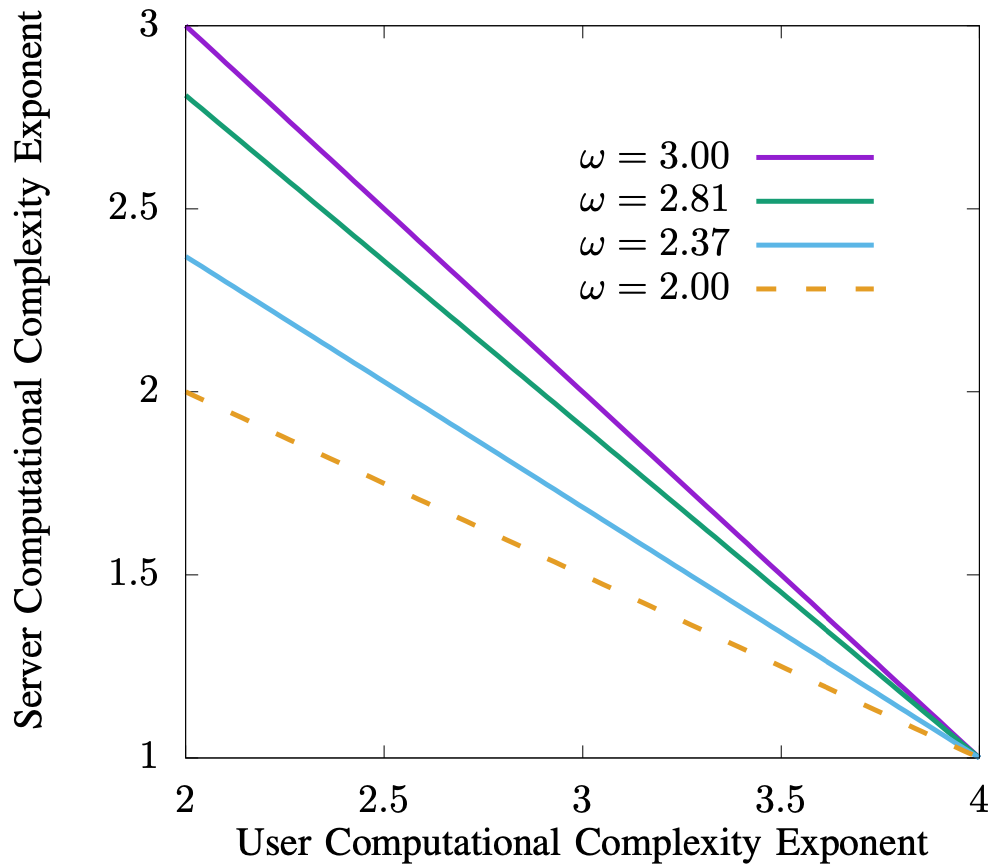}
\caption{Trade-off between the user and servers' computation exponents, e.g. if the servers use the standard matrix multiplication algorithm, $\omega =3$, and the user sets its computational complexity exponent to $2.5$ then the server will also have a computational complexity exponent of $2.5$.}
\label{fig:Fa}
  \end{subfigure} \hspace{5pt}
  \begin{subfigure}[b]{0.5\linewidth}
  \centering
\includegraphics[width=9cm]{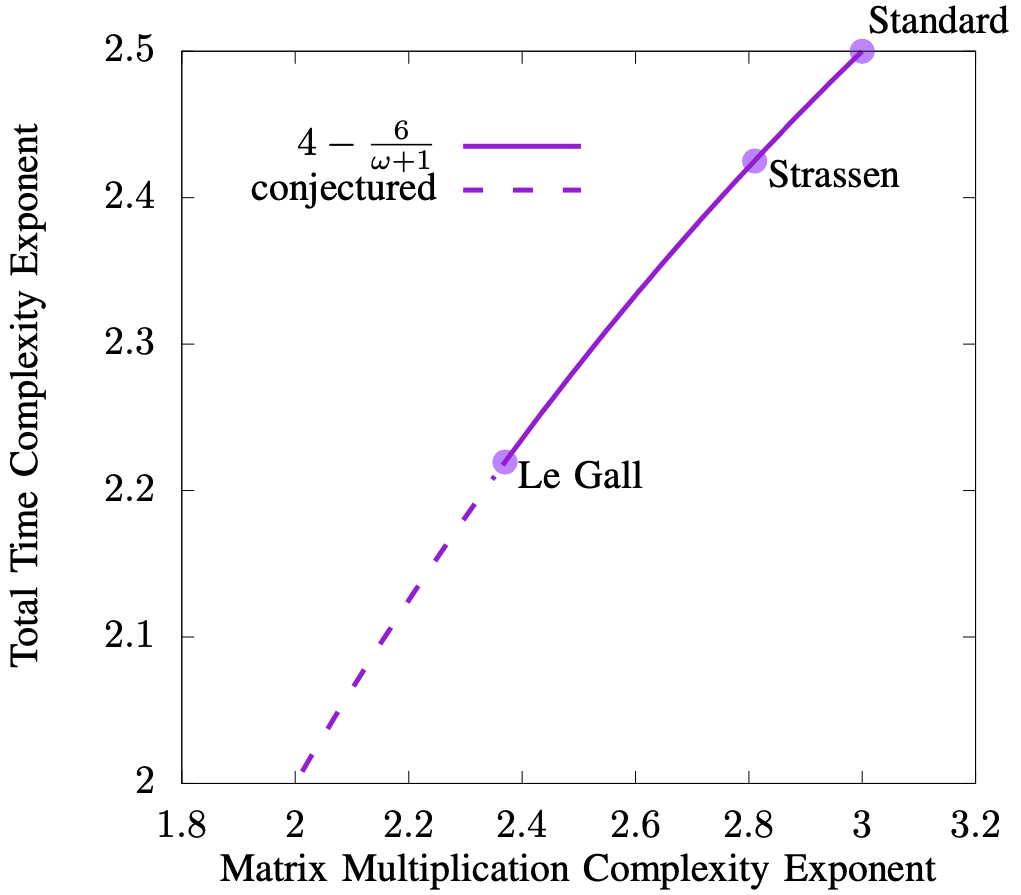}
\caption{Total time complexity for GASP when choosing best trade-off between user and servers' time complexity exponent as a function of the exponent of the matrix multiplication algorithm, $\omega$. We show that $\varepsilon$ can be chosen so that the total time complexity has exponent $4-\frac{6}{\omega+1}$.}
\label{fig:Fb}
  \end{subfigure} 
  \caption{The figures pertain to the setting in Section \ref{sec:param} where we analyze the computational complexity of GASP codes. In this setting, $r=s=t=n$, the security parameter $T$ is a constant, the partitioning parameters $K=L=n^\varepsilon$, and the servers use a matrix multiplication algorithm with computational complexity $\mathcal{O}(n^\omega)$.}
  \label{fig:F} 
\end{figure*}

\subsection{Related Work}

For distributed computations, Polynomial codes were originally introduced in \cite{polycodes1} in a slightly different setting, namely to mitigate stragglers in distributed matrix multiplication. This work was followed by a series of works \cite{polycodes2,pulkit,pulkit2,fundamental}. However, the polynomial codes in these works are not designed to ensure security, making them not applicable to settings where there are privacy concerns related to the data being used.

\subsection{Main Contributions}

The main contributions of this work are as follows.
\begin{itemize}

    \item In Section \ref{sec: private oracle}, we show that if the performance metric for SDMM is solely the download cost, then, by transforming the problem into a private information retrieval problem, we can obtain download costs much lower than those obtained using polynomial codes. This, however, comes at exponential upload and computational costs. The scheme, however, can be readily implemented in settings where the download cost is the performance metric of interest, like in \cite{Chang2019OnTU} or \cite{jia2019capacity}.
    
    \item In Section \ref{sec:param}, we show the existence of a regime under which outsourcing computations with security constraints is beneficial. We do this by analyzing the computational time complexity of a family of polynomial codes known as gap additive secure polynomial (GASP) codes \cite{d2019gasp,DOliveira2019DegreeTF}, and show that by adjusting the code parameters we can obtain a trade-off between the user's and the servers' computational time. By optimizing this trade-off we can show that if the time complexity of an operation in $\mathbb{F}_q$ is at most $\mathcal{Z}_q$ and a matrix multiplication algorithm for $n \times n$ matrices with time complexity $\mathcal{O}(n^\omega \mathcal{Z}_q)$ is used, then the total time taken for the user to retrieve $AB$ with the help of the servers is given by $\mathcal{O}(n^{4-\frac{6}{\omega+1}} \mathcal{Z}_q)$.

\end{itemize}

\section{Notation}

Our analysis in sections \ref{sec: polynomial codes} and \ref{sec:param} will require asymptotic notation for multivariate functions. As shown in \cite{howell2008asymptotic}, care must be taken when generalizing the asymptotic notation from univariate to multivariate functions.

Hence, we apply the following asymptotic notation.
For a function $f$ mapping $D \subseteq \mathbb{R}^n$ to $\mathbb{R}$, such that $D$ is in each coordinate not upper bounded, $\mathcal{O}(f(x))$ is the set of all $g : D \rightarrow \mathbb{R}$ such that there exist $N,c \in \mathbb{R}_+$ with $|g(x)| \le c |f(x)|$ for all $x$ with $N \le x_i$ for all $i \in \{1,\ldots,n\}$.
We define $\Omega(f(x))$ in the same way with the inequality replaced by $|g(x)| \ge c |f(x)|$.



We assume a base field $\mathbb{F}_{p}$ over which all elementary operations (addition, subtraction, multiplication, division) take constant time. We also assume that transmitting symbols in $\mathbb{F}_{p}$ between the user and the servers takes constant time. 

When constructing polynomial codes we will need to consider a field extension $\mathbb{F}_q$ of $\mathbb{F}_{p}$. We assume that any elementary operation or generation of a random element in $\mathbb{F}_q$ takes time at most $\mathcal{Z}_q$. The possible values for $\mathcal{Z}_q$ depend on the representation of the field elements, e.g. powers of a generator of the group of units $\mathbb{F}_q^\times$ or polynomials in $\mathbb{F}_p[X] / (f)$ (with $f \in \mathbb{F}_p[X]$ irreducible and of degree $d$ with $p^d=q$), and of the underlying machine, e.g. a Turing machine or a Boolean circuit~\cite{gashkov2013complexity}, and its implementation~\cite{katz1996handbook,brent2010modern}.

We set $\mathcal{Z}_q = \mathcal{O}({\log(q)}^\gamma)$, i.e. $\mathcal{Z}_q$ is polylogarithmic. If only additions and multiplications are used, for example, we can set $\gamma = 2$ if we use standard polynomial multiplication. This can be reduced by using better multiplication algorithms.


Next, we assume that the transmission of one $q$-ary symbol has communication cost at most $\mathcal{C}_q$. If we use the usual polynomial representation, then $\mathcal{C}_q = \mathcal{O}(\log(q))$.

We denote by $\mathcal{M}(r,s,t)$ the computation complexity of multiplying an $r\times s$ matrix by an $s\times t$ matrix. The study of the computational complexity of matrix multiplication is one of the main topics in algebraic complexity theory.

The most understood case is for square matrices, i.e. when $r=s=t=n$. In \cite{strassen1969gaussian}, Strassen presented the first algorithm outperforming the standard $\mathcal{O}(n^3)$. Strassen's algorithm has computational complexity $\mathcal{O}(n^{\log_2(7)}) \approx \mathcal{O}(n^{2.81})$. This was further improved to $\approx \mathcal{O}(n^{2.37})$ by Coppersmith, Winograd, and Le Gall~\cite{le2014powers,coppersmith1990matrix}. Since any entry of both $n \times n$ matrices has to be used in general, the number of operations is at least $\Omega(n^2)$. It is an open problem if there exists an algorithm which uses $\Theta(n^2)$ operations.

\section{Private Oracle Querying} \label{sec: private oracle}

In this section we show that by transforming the SDMM problem into a private information retrieval problem we can obtain schemes with download costs much lower than polynomial codes. These schemes, however, have exponential upload and computational costs. They serve as an example of why we cannot use the download cost as the sole performance metric as was done originally in the literature.

The scheme, however, can be readily implemented in settings where the download cost is the performance metric of interest, like in  \cite{Chang2019OnTU} or \cite{jia2019capacity}.

We name this scheme a private oracle querying scheme and begin by giving a simplified example of it. It consists in transforming the secure distributed matrix multiplication problem into a private information retrieval problem \cite{chor1995private}.

The reason for naming it Oracle Querying, is that the technique applies to settings more general than matrix multiplication. Indeed the same can be done even for non-computable functions, say if the servers have access to some oracle.

\subsection{An Example}

Let $A,B \in \mathbb{F}_2$ and the number of servers be $N=2$ none of which collude, thus $T=1$, $r=s=t=1$, and $q=2$. The user is interested in $AB \in \mathbb{F}_2$. The Private Oracle Querying scheme consists in transforming SDMM into a private information retrieval problem.

The servers begin by precomputing all $M = q^{s(r+t)} = 4$ possible multiplications, shown in Table~\ref{tab:example}. Then, each server stores all possible multiplications in its database, i.e. the third column of Table~\ref{tab:example}.

\begin{table}[h] 
    \centering
    \begin{tabular}{c c c}
       $A$  & $B$ & $AB$ \\
       \toprule
       $0$ & $0$ & $0$ \\
       $0$ & $1$ & $0$ \\
       $1$ & $0$ & $0$ \\
       $1$ & $1$ & $1$ \\
    \end{tabular}
    \caption{Each server stores the third column in the table.}
    \label{tab:example}
\end{table}   

The user can obtain the multiplication privately via a private information retrieval scheme where the user wants one file out of the database, $\bm{D}$, of $M = q^{s(r+t)} = 4$ files each one of length $rt=1$. This can be done, for example, by using a simple secret sharing scheme achieving a download rate of $\mathcal{D} = \frac{N-T}{N}= \frac{1}{2}$, as shown in Table \ref{tab:pir example}.

\begin{table}[h] 
    \centering
    \begin{tabular}{c c c}
       & Server $1$  & Server $2$ \\
       \toprule
       Query: & $\bm{q}$ & $\bm{q}+\bm{e}_i$ \\
       Response: & $\langle \bm{D} , \bm{q} \rangle$ & $\langle \bm{D} , \bm{q}+\bm{e}_i \rangle$
    \end{tabular}
    \caption{To privately retrieve the $i$-th entry in the database from the servers the user generates a vector $\bm{q} \in \mathbb{F}_2^{Mrt}$ uniformly at random. He then sends $\bm{q}$ to Server $1$ and $\bm{q}+\bm{e}_i$ to Server $2$ where $\bm{e}_i$ is the $i$-th vector in the standard basis of $\mathbb{F}_2^{Mrt}$. The Servers perform an inner product of the received query with their database and sends it back to the user. The user then retrieves the $i$-th entry in the database by subtracting the responses.}
    \label{tab:pir example}
\end{table}

\subsection{The Scheme}

We now present the scheme, which we refer to as private oracle querying.

\begin{theorem} \label{teo:equivalence}
Let $N$ be the number of servers, $T$ the security parameter, $A \in \mathbb{F}_q^{r\times s}$ and $B \in \mathbb{F}_q^{s\times t}$. Then, the secure distributed matrix multiplication problem for computing ${AB \in \mathbb{F}_q^{r\times t}}$ can be solved by solving a private information retrieval problem where each server has $M = q^{s(r+t)}$ files, each one of length $rt$.
\end{theorem}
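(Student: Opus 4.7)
The plan is to prove the theorem by an explicit reduction that formalizes the construction illustrated in Table~\ref{tab:example} and Table~\ref{tab:pir example}. First, I would enumerate the set of all ordered pairs $(A',B')$ with $A' \in \mathbb{F}_q^{r\times s}$ and $B' \in \mathbb{F}_q^{s\times t}$. Since $A'$ has $rs$ entries and $B'$ has $st$ entries in $\mathbb{F}_q$, there are exactly $q^{rs+st}=q^{s(r+t)}=M$ such pairs. I would fix a canonical bijection $\varphi$ between these pairs and the index set $\{1,\ldots,M\}$, known to the user and to all $N$ servers.

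Next, I would have each server precompute the product $A'B' \in \mathbb{F}_q^{r\times t}$ for every pair $(A',B')$ and store it as the $\varphi(A',B')$-th file of a local database. The databases are identical across servers, and each of the $M$ files is a vector of length $rt$ over $\mathbb{F}_q$. Given the user's actual matrices $(A,B)$, the user computes the index $i^\star=\varphi(A,B)$ and invokes any $T$-private information retrieval scheme with parameters $(N, M, \text{file length } rt)$; such schemes exist, for example the secret-sharing PIR scheme used in the example. The user's output is the retrieved file, which by construction equals $AB$, so correctness is immediate.

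To conclude, I would verify security. The $T$-privacy of the PIR subroutine guarantees that the joint view of any $T$ colluding servers is independent of $i^\star$. Since $\varphi$ is a bijection, independence from $i^\star$ is equivalent to independence from $(A,B)$, which is precisely the SDMM security requirement. There is no genuine technical obstacle here; the theorem is really a structural observation. The main subtlety is conceptual: one must recognize that the reduction is efficient only with respect to download cost, while the servers' storage and the upload cost both blow up as $q^{s(r+t)}$. This blow-up is in fact the motivation the authors emphasize for rejecting the download cost as the sole performance metric.
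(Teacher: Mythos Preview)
Your proposal is correct and follows essentially the same approach as the paper: each server precomputes and stores all $M=q^{s(r+t)}$ possible products $A'B'$ as files of length $rt$, and the user retrieves the file indexed by $(A,B)$ via a $T$-private PIR scheme. Your write-up is in fact more explicit than the paper's three-sentence proof, spelling out the bijection $\varphi$ and the security reduction from SDMM privacy to PIR $T$-privacy, but the underlying idea is identical.
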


\begin{proof}
As a preprocessing step of the scheme, each server computes all $M = q^{s(r+t)}$ possible matrix multiplications and stores them in its database. Considering each result of each multiplication as a file, each server then has $M$ files, each of size $rt$. Thus, the secure distributed matrix multiplication problem can be reinterpreted as a private information retrieval problem where each server has $M$ files, each of size $rt$.
\end{proof}

If the field $q$ is large enough, the user can use a simple secret sharing scheme.

\begin{corollary} \label{cor: secret share}
Under the same hypothesis of Theorem \ref{teo:equivalence}, for large enough field size $q$, there exists a secure distributed matrix multiplication scheme with download rate $\mathcal{D} = \frac{N-T}{N}$.
\end{corollary}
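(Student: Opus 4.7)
The plan is to combine Theorem~\ref{teo:equivalence} with a standard secret-sharing-based PIR scheme. By Theorem~\ref{teo:equivalence}, privately computing $AB$ reduces to privately retrieving a single file, of length $rt$, from a replicated database of $M = q^{s(r+t)}$ files held by each of the $N$ servers, while keeping the identity of that file hidden from any $T$ colluding servers. It therefore suffices to exhibit a $T$-private PIR scheme on this instance achieving download rate $(N-T)/N$.

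For the construction I would generalize the $N=2$, $T=1$ toy scheme of Table~\ref{tab:pir example}. Partition each file into $N-T$ equal-length blocks (after padding if $N-T \nmid rt$; the padding is negligible and can be avoided once $q$ is sufficiently large). Pick $N$ distinct nonzero evaluation points $\alpha_1,\ldots,\alpha_N \in \mathbb{F}_q^{\times}$. The user forms the polynomial-vector query
$$
\bm{q}(x) \;=\; \bm{e}_i \;+\; x\bm{R}_1 \;+\; x^2 \bm{R}_2 \;+\; \cdots \;+\; x^T \bm{R}_T,
$$
where $\bm{e}_i \in \mathbb{F}_q^{M}$ is the indicator of the desired file and $\bm{R}_1,\ldots,\bm{R}_T \in \mathbb{F}_q^{M}$ are chosen uniformly at random, and sends $\bm{q}(\alpha_n)$ to server $n$. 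Each server replies with a single symbol obtained by taking the inner product of $\bm{q}(\alpha_n)$ with an $(N, N-T)$ MDS-coded version of its database along the block dimension — concretely, the $n$-th Reed--Solomon evaluation (at $\alpha_n$) of the $N-T$ blocks of $\langle \bm{D}, \bm{q}(\alpha_n)\rangle$. After collecting the $N$ responses, the user decodes using polynomial interpolation to recover the $N-T$ blocks that form the target file.

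The field size requirement is mild: one needs $N$ distinct nonzero evaluation points together with the MDS property, which is guaranteed for $q \ge N+1$, so the ``large enough $q$'' hypothesis is easily met (indeed $q$ is already exponential in the problem size). Privacy against any $T$ colluders follows from the fact that $\bm{q}(\alpha_{n_1}),\ldots,\bm{q}(\alpha_{n_T})$ form a Shamir-style secret sharing of $\bm{e}_i$ and are therefore jointly uniform and independent of $i$. Correctness follows from MDS decodability. The rate computation is then immediate: the user downloads $rt/(N-T)$ symbols from each of the $N$ servers, a total of $Nrt/(N-T)$ symbols, to retrieve $rt$ symbols of $AB$, giving $\mathcal{D} = (N-T)/N$. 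The main technical point requiring care is to verify that the MDS encoding of the server responses commutes with the $T$-private query structure so that privacy and decodability hold simultaneously; every other piece is routine bookkeeping.
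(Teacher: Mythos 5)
Your high-level route matches the paper's: reduce SDMM to $T$-private PIR of one file out of $M$ replicated files via Theorem~\ref{teo:equivalence}, then achieve rate $\frac{N-T}{N}$ with a secret-sharing/MDS scheme. The paper disposes of the second step by citing the construction in Section~III~B of~\cite{8906051}; you instead build it explicitly, and that is where a genuine gap appears. Write $\bm{D}^{(1)},\dots,\bm{D}^{(N-T)}$ for the block-columns of the database. With your single query polynomial $\bm{q}(x)=\bm{e}_i+\sum_{k=1}^{T}x^k\bm{R}_k$ and the server returning the Reed--Solomon combination $\sum_{j=1}^{N-T}\alpha_n^{\,j-1}\langle\bm{D}^{(j)},\bm{q}(\alpha_n)\rangle$, the response is the evaluation at $\alpha_n$ of
\[
p(x)=\sum_{j=1}^{N-T}x^{j-1}D_i^{(j)}+\sum_{j=1}^{N-T}\sum_{k=1}^{T}x^{j-1+k}\langle\bm{D}^{(j)},\bm{R}_k\rangle .
\]
Interpolating $p$ from the $N$ responses is fine (its degree is $N-1$), but the coefficient of $x^{m}$ for $1\le m\le N-T-1$ is $D_i^{(m+1)}+\sum_{k}\langle\bm{D}^{(m+1-k)},\bm{R}_k\rangle$: the interference created by the random masks lands in exactly the same degrees as the desired blocks. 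The user cannot subtract it, since that would require knowing the inner products of the (exponentially large) database with the $\bm{R}_k$. Only the coefficient of $x^0$, namely $D_i^{(1)}$, comes out clean, so for $N-T\ge 2$ the scheme fails precisely at the point you flagged as ``requiring care''; linearity does not make the MDS encoding and the private query structure compatible on its own.

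The missing idea is degree separation: the desired symbols must occupy degrees disjoint from the interference, which forces independent noise per block rather than one RS-encoded noisy response. For example, send server $n$ the $N-T$ query vectors $\bm{q}_n^{(j)}=\sum_{k=0}^{T-1}\alpha_n^{k}\bm{R}_k^{(j)}+\alpha_n^{T+j-1}\bm{e}_i$ and have it return the single block $\sum_{j=1}^{N-T}\langle\bm{D}^{(j)},\bm{q}_n^{(j)}\rangle$. This is the evaluation of a degree-$(N-1)$ polynomial whose coefficients of $x^{T},\dots,x^{N-1}$ are exactly $D_i^{(1)},\dots,D_i^{(N-T)}$ and whose bottom $T$ coefficients absorb all interference; privacy holds because any $T$ evaluations of each degree-$(T-1)$ masking polynomial at distinct points are jointly uniform and independent of $i$. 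With that repair, your rate count ($N$ servers each sending $rt/(N-T)$ symbols, hence $\mathcal{D}=\frac{N-T}{N}$) and the field-size requirement $q\ge N+1$ go through, and the result is essentially the construction the paper cites.
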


\begin{proof}
This rate can be achieved by using the construction in Section III B of \cite{8906051}. The large field size is needed to guarantee the existence of an MDS code.
\end{proof}

The download capacity for private information retrieval is known \cite{sun2017capacity}. However, as the number of files grows, this capacity converges to the rate in Corollary \ref{cor: secret share}.

If one uses the download rate as the sole performance metric for the setting in \cite{ravi2018mmult}, these private information retrieval codes can outperform the polynomial codes in \cite{ravi2018mmult,Kakar2019OnTC,d2019gasp,DOliveira2019DegreeTF, Aliasgari2019DistributedAP,Kakar2019UplinkDownlinkTI,Yu2020EntangledPC}. They, however, have two shortcomings.

First, the upload cost is exponential, since even a single query will have the size of the whole database, $q^{s(r+t)}$.

Second, the time to generate a single query, $\Omega(q^{s(r+t)})$, is much longer than the time for the user to calculate the matrix multiplication locally using the standard matrix multiplication algorithm, $\mathcal{O}(rst \mathcal{Z}_q)$.

In other settings, where the user does not have access to both matrices and computational costs are not considered, like in \cite{Chang2019OnTU} or \cite{jia2019capacity}, private oracle querying can be readily applied.






\section{Polynomial Codes} \label{sec: polynomial codes}

Polynomial codes for secure distributed matrix multiplication were first introduced in \cite{ravi2018mmult} and later improved on in  \cite{ravi2018mmult,Kakar2019OnTC,d2019gasp,DOliveira2019DegreeTF, Aliasgari2019DistributedAP,Kakar2019UplinkDownlinkTI,Yu2020EntangledPC}. Our goal is to highlight the existence of a regime where securely offloading the computation to the workers is beneficial. Towards that goal we analyze the communication and computation complexity of a family of polynomial codes called GASP codes \cite{d2019gasp,DOliveira2019DegreeTF}. Since we are using upper bounds to show that SDMM is beneficial, constructions which outperform GASP codes will also be upper bounded by the expressions shown. The analysis shown here can be straightforwardly extended to other polynomial codes in the literature.

\subsection{Constructing GASP Codes}

Let $A \in \mathbb{F}_q^{r \times s}$ and $B \in \mathbb{F}_q^{s \times t}$ be partitioned as follows:
\begin{equation*} \label{eq:outer product}
    A = \begin{bmatrix}
    A_1 \\ \vdots \\ A_K
    \end{bmatrix},\quad 
    B = \begin{bmatrix} 
    B_1 & \cdots & B_L
    \end{bmatrix},
\end{equation*}

\begin{equation*}\label{partition1}
    \text{so that}\quad 
    AB = \begin{bmatrix}
    A_1B_1 & \cdots & A_1B_L \\
    \vdots & \ddots & \vdots \\
    A_KB_1 & \cdots & A_KB_L
    \end{bmatrix}.
\end{equation*}

The user chooses $T$ matrices $R_t$ over $\F_q$ of the same size as the $A_k$ independently and uniformly at random, and $T$ matrices $S_t$ of the same size as the $B_\ell$ independently and uniformly at random. A polynomial code is a choice of ${\alpha = (\alpha_1,\ldots,\alpha_{K+T})\in\mathbb{N}^{K+T}}$ and
${\beta = (\beta_1,\ldots,\beta_{L+T})\in\mathbb{N}^{L+T}}$ defining the polynomials
\[
f(x) = \sum_{k = 1}^KA_kx^{\alpha_k} + \sum_{t = 1}^TR_tx^{\alpha_{K+t}},
\]

\[
g(x) = \sum_{\ell = 1}^LB_\ell x^{\beta_\ell} + \sum_{t = 1}^TS_t x^{\beta_{L+t}}
\]
and their product $h(x) = f(x)g(x)$.

Given $N$ servers, the user chooses evaluation points $a_1,\ldots,a_N\in \F_{q^r}$ for some finite extension $\F_{q^r}$ of $\F_q$.  They then send $f(a_n)$ and $g(a_n)$ to server $n = 1,\ldots,N$, who computes the product $f(a_n)g(a_n) = h(a_n)$ and transmits it back to the user.  The user then interpolates the polynomial $h(x)$ given all of the evaluations $h(a_n)$, and attempts to recover all products $A_kB_\ell$ from the coefficients of $h(x)$.

GASP codes \cite{d2019gasp,DOliveira2019DegreeTF} are a family of polynomial codes constructed via a combinatorial table called the degree table.

In Table \ref{tab:poly communication} we show the upload and download time complexity for GASP codes. These values follow directly from the analysis done in Appendix B of \cite{9004505}.

\begin{table}[h]
    \centering
    \begin{tabular}{ll}
    \toprule
    Operation & Time Complexity  \\
    \midrule
    Upload & $\mathcal{O}(N s(\frac{r}{K}+\frac{t}{L})\mathcal{C}_q)$\\
    Download & $\mathcal{O}(N \frac{rt}{KL}\mathcal{C}_q)$\\
    \bottomrule
    \end{tabular}
     \caption{Communication time for GASP codes.}
        \label{tab:poly communication}
\end{table}

\subsection{The Computational Complexity of GASP codes}

In this section we perform an analysis on the computational time complexity of GASP codes. The computations can be separated into three parts.
\begin{enumerate}
    \item \textbf{User Encoding}: the computation time it takes the user to generate the evaluations that will be uploaded to the servers.
    \item \textbf{Server Computation}: the computation time it will take each server to multiply the two evaluations it receives from the user.
    \item \textbf{User Decoding}: the computation time it will take the user to decode the matrix multiplication from what it received from the servers.
\end{enumerate}

\begin{table}[h]
    \centering
    \begin{tabular}{ll}
    \toprule
    Operation & Time Complexity  \\
    \midrule
    User Encoding &  $\mathcal{O}(Ns(r+t+T(\frac{r}{K} + \frac{t}{L}))\mathcal{Z}_q)$\\
    Server Computation & $\mathcal{O}(\mathcal{M}\left(\frac{r}{K},s,\frac{t}{L} \right)\mathcal{Z}_q)$ \\
    User Decoding & $\mathcal{O}(Nrt\mathcal{Z}_q)$ \\
    \bottomrule
    \end{tabular}
     \caption{Computation time for GASP codes.}
        \label{tab:poly computational}
\end{table}

\begin{theorem} \label{teo: comp time gasp}
The computational time complexity for GASP codes is given in Table \ref{tab:poly computational}.
\end{theorem}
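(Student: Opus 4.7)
The plan is to prove each row of Table~\ref{tab:poly computational} separately, mirroring the three-part decomposition (encoding, server computation, decoding) that precedes the theorem statement. For each part I would count scalar operations in $\mathbb{F}_q$, charge each at cost $\mathcal{Z}_q$, and show the claimed bound.

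\textbf{User Encoding.} Here I would argue that evaluating $f(x) = \sum_{k=1}^K A_k x^{\alpha_k} + \sum_{t=1}^T R_t x^{\alpha_{K+t}}$ at a single point $a_n$ requires: (i) computing the $K+T$ scalars $a_n^{\alpha_k}$, which with repeated squaring costs $O((K+T)\log(\max_k \alpha_k))$ scalar operations (dominated by the matrix arithmetic below), and (ii) computing the linear combination of $K+T$ matrices where the first $K$ have size $(r/K)\times s$ and the last $T$ have size $(r/K)\times s$ as well. Scaling a single $(r/K)\times s$ block by a scalar and adding it to an accumulator costs $O(rs/K)$ operations, so one evaluation of $f$ costs $O\bigl((K+T)\cdot rs/K\bigr) = O\bigl(s(r + Tr/K)\bigr)$ operations. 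Doing this for all $N$ evaluation points gives $O(Ns(r+Tr/K)\mathcal{Z}_q)$, and the symmetric argument for $g$ gives $O(Ns(t+Tt/L)\mathcal{Z}_q)$. The claimed bound follows by adding these and noting that generating the $T$ random matrices $R_t$, $S_t$ is dominated since they contribute $O(T(rs/K + st/L))$ random draws.

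\textbf{Server Computation.} This is immediate: server $n$ receives $f(a_n) \in \mathbb{F}_q^{(r/K)\times s}$ and $g(a_n) \in \mathbb{F}_q^{s\times(t/L)}$ and multiplies them, which costs $O(\mathcal{M}(r/K, s, t/L)\mathcal{Z}_q)$ by definition of $\mathcal{M}$.

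\textbf{User Decoding.} This is the most delicate step and where I expect the main difficulty. The polynomial $h(x) = f(x)g(x)$ has at most $N$ scalar coefficients (by the GASP construction the evaluation points are chosen so the relevant coefficients lie in an $N$-dimensional subspace), each being a matrix of size $(r/K)\times(t/L)$ with $rt/(KL)$ entries. The user holds the $N$ evaluations $h(a_n)$ and needs only the $KL$ specific coefficient matrices giving $A_kB_\ell$ for $k=1,\dots,K$, $\ell=1,\dots,L$. The key observation is that recovering these $KL$ coefficients from the $N$ evaluations is a linear operation given by a fixed $(KL)\times N$ submatrix $W$ of the inverse Vandermonde-like matrix determined by the evaluation points $a_1,\dots,a_N$; crucially, $W$ depends only on the code parameters and can be precomputed independently of $A$ and $B$. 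Applying $W$ entrywise across the matrix coordinate then amounts to computing, for each of the $rt/(KL)$ scalar positions inside a block and each of the $KL$ target coefficients, one inner product of length $N$. The total number of scalar multiply-adds is therefore
\[
KL \cdot \frac{rt}{KL} \cdot N \;=\; Nrt,
\]
giving the claimed $\mathcal{O}(Nrt\,\mathcal{Z}_q)$.

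The main obstacle is justifying that decoding really costs only $O(Nrt)$ rather than the naive $O(N^2 rt/(KL))$ that a full polynomial interpolation on each scalar coordinate would suggest; this requires pointing out that (a) the interpolation matrix is shared across all $rt/(KL)$ scalar coordinates and is precomputable, and (b) one keeps only the $KL$ rows of $W$ corresponding to the desired coefficients, discarding the "interference" coefficients that would otherwise be part of a full interpolation. Once these observations are in place, the three bounds in Table~\ref{tab:poly computational} follow directly.
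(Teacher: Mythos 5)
Your proposal is correct and follows essentially the same route as the paper: count scalar operations in $\mathbb{F}_q$ for each of the three phases, noting in particular that the generalized Vandermonde interpolation matrix is precomputed so that decoding reduces to $KL$ linear combinations of the $N$ answers, giving $KL\cdot\frac{rt}{KL}\cdot N = Nrt$ multiply-adds. The extra details you supply (the cost of the powers $a_n^{\alpha_k}$, the random matrix generation, and the explicit justification that decoding avoids a full per-coordinate interpolation) are correct refinements of the same argument rather than a different approach.
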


\begin{proof}
\noindent

\begin{enumerate}
    \item \textbf{User Encoding}: The number of additions and multiplications in $\mathbb{F}_q$ needed to compute an evaluation of $f$ and $g$ are $(K+T)\frac{rs}{K}$ and $(L+T) \frac{st}{L}$. The result follows from performing this $N$ times, once for each server.
    \item \textbf{Server Computation}: Each server must compute the product of two matrices of dimensions $\frac{r}{K} \times s$ and $s \times \frac{t}{L}$.
    \item \textbf{User Decoding}: We assume that the inverted generalized Vandermonde matrix is precomputed. Then, the interpolation of $A_iB_j$ is a linear combination of the servers' answers. The number of additions in $\mathbb{F}_q$ is $KL(N-1)\frac{r}{K}\frac{t}{L}$ and the number of multiplications is $KLN\frac{r}{K}\frac{t}{L}$.
\end{enumerate}
\end{proof}

If using the standard matrix multiplication algorithm then we can substitute $\mathcal{M}\left(\frac{r}{K},s,\frac{t}{L} \right) = \mathcal{O}\left(\frac{rst}{KL}\right)$.

\section{Choosing the Right Parameters}\label{sec:param}

In this section we show that, by choosing the right parameters for GASP codes, secure distributed matrix multiplication can speed up the computation time when compared to the user performing the computation locally.

We will analyze the following setting. We consider square matrices, i.e. $r=s=t=n$, assume that the security parameter, $T$, is a constant, and that the partitioning parameter $K=L=n^\varepsilon$ for some $0 \le \varepsilon \le 1$. We also assume that the servers multiply two $n \times n$ matrices using an algorithm with computational complexity $\mathcal{O}(n^\omega)$. Our goal is to study the time complexity of GASP codes as $n$ grows.

In \cite{DOliveira2019DegreeTF}, it was shown that for GASP codes we have the bounds $KL \le N \le (K+T)(L+T)$. Thus, $N = \mathcal{O}(K^2)$.

We calculate the time complexity for each of the servers.

\begin{proposition} \label{pro: server comp time}
Let $r=s=t=n$, $T$ be a constant, $K=L$, and $\mathcal{O}(n^\omega)$ be the computational complexity of the algorithm which the servers use to multiply an $n \times n$ matrix. Then, the time complexity for each server to compute the matrix multiplication sent to it in the GASP scheme is $\mathcal{O}(\tfrac{n^\omega}{K^{\omega-1}} \mathcal{Z}_q)$.
\end{proposition}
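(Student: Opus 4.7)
The plan is to reduce the claim directly to the Server Computation row of Table~\ref{tab:poly computational} (Theorem~\ref{teo: comp time gasp}) and then bound the rectangular matrix-multiplication complexity $\mathcal{M}(n/K,n,n/K)$ in terms of the square matrix-multiplication exponent $\omega$.

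First, substituting $r=s=t=n$ and $K=L$ into Theorem~\ref{teo: comp time gasp} immediately gives that each server's workload is $\mathcal{O}(\mathcal{M}(n/K,n,n/K)\mathcal{Z}_q)$. All that remains is to estimate $\mathcal{M}(n/K,n,n/K)$ using only the hypothesis that square $m\times m$ multiplication costs $\mathcal{O}(m^\omega)$ operations.

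The key step is a standard block decomposition. I would write the $(n/K)\times n$ matrix as a horizontal concatenation of $K$ square blocks of size $(n/K)\times(n/K)$, and the $n\times(n/K)$ matrix as a vertical stack of $K$ such blocks. Then the product is
\[
\sum_{i=1}^{K} X_i Y_i,
\]
where each $X_i Y_i$ is a product of two $(n/K)\times(n/K)$ matrices. By hypothesis, each such product costs $\mathcal{O}((n/K)^\omega)$ operations, and combining the $K$ summands requires $K-1$ additions of $(n/K)\times(n/K)$ matrices, each of cost $\mathcal{O}((n/K)^2)$. Totaling,
\[
\mathcal{M}(n/K,n,n/K) = \mathcal{O}\!\left(K\cdot(n/K)^\omega + K\cdot(n/K)^2\right) = \mathcal{O}\!\left(\tfrac{n^\omega}{K^{\omega-1}} + \tfrac{n^2}{K}\right).
\]

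Finally, I would observe that since $\omega\ge 2$ and $K\le n$ (which is implicit because we partition $n$ into $K$ equal blocks, so $K^\varepsilon\le n$), we have $(n/K)^{\omega-2}\ge 1$, and therefore $n^2/K \le n^\omega/K^{\omega-1}$. Hence the multiplication term dominates, yielding $\mathcal{M}(n/K,n,n/K)=\mathcal{O}(n^\omega/K^{\omega-1})$, and multiplying by $\mathcal{Z}_q$ gives the claimed bound. The only mild subtlety is the corner case $\omega=2$, in which the two terms coincide in order, so the claimed bound still holds; this is the single point where one must be slightly careful, but it does not present a real obstacle.
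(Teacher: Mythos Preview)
Your argument is correct and essentially identical to the paper's: both split the $(n/K)\times n$ and $n\times(n/K)$ matrices into $K$ square $(n/K)\times(n/K)$ blocks, write the product as $\sum_{i=1}^{K} X_iY_i$, count $K$ square multiplications at $\mathcal{O}((n/K)^\omega)$ plus $(K-1)$ additions at $(n/K)^2$, and simplify using $\omega\ge 2$. The only difference is cosmetic---you route through Theorem~\ref{teo: comp time gasp} first, whereas the paper argues directly---and your aside ``$K^\varepsilon\le n$'' should simply read $K\le n$.
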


\begin{proof}
The rectangular matrices each server has to multiply, say $F$ and $G$, have dimensions $\frac{n}{K} \times n$ and $n \times \frac{n}{K}$, so that they can be split into
\begin{align*}
F = \begin{bmatrix}F_1 & \cdots & F_K\end{bmatrix},
\quad
G = \begin{bmatrix}G_1 \\ \vdots \\ G_K\end{bmatrix},
\end{align*}
so that $F_i$ and $G_i$ are square matrices with shape $\frac{n}{K} \times \frac{n}{K}$ and
\begin{align*}
F \cdot G = \sum_{i=1}^{K} F_i \cdot G_i.
\end{align*}
The right hand side can be evaluated by $K$ matrix multiplications requiring $\mathcal{O}((\frac{n}{K})^\omega)$ (with $2 \le \omega$) field operations and $(K-1)(\frac{n}{K})^2$ additions of field elements. So the total time complexity is
\begin{align*}
\mathcal{O}(((K-1)(\tfrac{n}{K})^2 + K(\tfrac{n}{K})^\omega)\mathcal{Z}_q)
=\mathcal{O}(\tfrac{n^\omega}{K^{\omega-1}} \mathcal{Z}_q).
\end{align*}
\end{proof}

\begin{proposition} \label{pro: GASP time example}
Assume the setting of Proposition \ref{pro: server comp time} and that $K=L=n^\varepsilon$. Then, the time complexity for each operation in GASP is given in Table \ref{tab:poly optimized}.
\end{proposition}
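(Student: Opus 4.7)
The plan is to obtain Table~\ref{tab:poly optimized} by a straightforward specialization of the generic bounds already established. The upload and download rows come directly from Table~\ref{tab:poly communication}, the user-encoding and user-decoding rows come from Theorem~\ref{teo: comp time gasp}, and the server-computation row is exactly the content of Proposition~\ref{pro: server comp time}. All that remains is to substitute $r=s=t=n$, $K=L=n^\varepsilon$, and use the bound $N = \mathcal{O}(K^2) = \mathcal{O}(n^{2\varepsilon})$, which follows from the inequality $KL \le N \le (K+T)(L+T)$ recalled just before the proposition together with the assumption that $T$ is constant.

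Carrying out the substitutions row by row, I would argue as follows. For the upload cost, $r/K + t/L = 2n^{1-\varepsilon}$, so $Ns(r/K + t/L)\mathcal{C}_q = \mathcal{O}(n^{2\varepsilon}\cdot n\cdot n^{1-\varepsilon}\mathcal{C}_q) = \mathcal{O}(n^{2+\varepsilon}\mathcal{C}_q)$. For the download cost, $rt/(KL) = n^{2-2\varepsilon}$, so $N\cdot rt/(KL)\cdot \mathcal{C}_q = \mathcal{O}(n^{2\varepsilon}\cdot n^{2-2\varepsilon}\mathcal{C}_q) = \mathcal{O}(n^2 \mathcal{C}_q)$. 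For user encoding, the factor $r+t+T(r/K+t/L) = 2n + 2Tn^{1-\varepsilon}$ is $\mathcal{O}(n)$ since $T$ is constant and $n^{1-\varepsilon}\le n$ for $\varepsilon \in [0,1]$, hence the expression collapses to $\mathcal{O}(n^{2\varepsilon}\cdot n\cdot n\cdot \mathcal{Z}_q) = \mathcal{O}(n^{2+2\varepsilon}\mathcal{Z}_q)$. For the server row I would simply quote Proposition~\ref{pro: server comp time} and rewrite $n^\omega/K^{\omega-1}$ as $n^{\omega-\varepsilon(\omega-1)}$. Finally, user decoding gives $Nrt\mathcal{Z}_q = \mathcal{O}(n^{2\varepsilon}\cdot n^2 \mathcal{Z}_q) = \mathcal{O}(n^{2+2\varepsilon}\mathcal{Z}_q)$.

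There is essentially no obstacle: the proposition is a bookkeeping step that translates the parameter-agnostic bounds of the previous section into the specific regime studied in Section~\ref{sec:param}. The only point worth flagging explicitly is the asymptotic comparison in the user-encoding row, where one must check that the $T$-dependent subleading term $Tn^{1-\varepsilon}$ is absorbed by the $2n$ term for every $\varepsilon \in [0,1]$; this is immediate because $T$ is constant. Once that is noted, the proof is a single table of substitutions.
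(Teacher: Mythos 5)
Your proposal is correct and follows exactly the paper's approach: the paper's own proof is a one-line statement that the result follows by substituting the hypotheses and Proposition~\ref{pro: server comp time} into Theorem~\ref{teo: comp time gasp} (and Table~\ref{tab:poly communication}), and your row-by-row substitutions, including the use of $N=\mathcal{O}(K^2)=\mathcal{O}(n^{2\varepsilon})$, carry out precisely that computation and match every entry of Table~\ref{tab:poly optimized}.
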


\begin{proof}
The proof follows from substituting the values in the hypothesis and Proposition \ref{pro: server comp time} into Theorem \ref{teo: comp time gasp}.
\end{proof}


\begin{table}[h]
    \centering
    \begin{tabular}{ll}
    \toprule
    Operation & Time Complexity \\
    \midrule
    Query Encoding &   $\mathcal{O}(n^{2+2\varepsilon}\mathcal{Z}_q)$ \\
    User Decoding & $\mathcal{O}(n^{2+2\varepsilon}\mathcal{Z}_q)$ \\
    Each Server &   $\mathcal{O}(n^{\omega-\varepsilon(\omega-1)}\mathcal{Z}_q)$ \\
    \midrule
    Upload & $\mathcal{O}(n^{2+\varepsilon} \mathcal{C}_q)$ \\
    Download & $\mathcal{O}(n^2 \mathcal{C}_q)$ \\
    \bottomrule
    \end{tabular}
        \caption{Time complexity for the setting in Proposition \ref{pro: GASP time example}.}
            \label{tab:poly optimized}
\end{table}

We will now deal with the field size. Indeed, to use GASP codes we need the field size to satisfy certain bounds. Thus, by making $n$ grow, it will also be necessary to make the field size $q$ to grow. 

\begin{proposition}
Assume the setting in Proposition \ref{pro: GASP time example} and that ${\mathcal{Z}_q = \mathcal{O}(\log(q)^\gamma)}$. Then $\mathcal{Z}_q = \mathcal{O}(\log(n)^\gamma)$.
\end{proposition}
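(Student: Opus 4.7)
The plan is to chain together the field-size requirement for GASP codes with the hypothesis $\mathcal{Z}_q = \mathcal{O}(\log(q)^\gamma)$, exploiting the fact that only a polynomial-in-$n$ field size is needed. The key observation is that under the proof-by-monotonicity philosophy, any upper bound on $q$ of the form $q \leq \mathrm{poly}(n)$ is enough, since then $\log(q) = \mathcal{O}(\log n)$ and the $\gamma$-th power preserves this.

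First I would recall the structural constraint on the alphabet. In the GASP construction the user evaluates $f(x)$ and $g(x)$ at $N$ pairwise-distinct points of an extension field, and later inverts a generalized Vandermonde matrix built from those points. Hence we need $q$ large enough to contain $N$ distinct nonzero elements, i.e.\ $q \geq N+1$; any prime power satisfying this suffices (Corollary~\ref{cor: secret share} and the GASP construction in \cite{d2019gasp,DOliveira2019DegreeTF} are both compatible with this mild constraint). Therefore we may choose $q$ to be the smallest prime power with $q \geq N+1$, so that $q = \mathcal{O}(N)$ (since there is always a prime between $m$ and $2m$).

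Second, I would plug in the sizes from the setting of Proposition~\ref{pro: GASP time example}. There we have $K = L = n^\varepsilon$, and from the bound $N \leq (K+T)(L+T)$ recalled just before Proposition~\ref{pro: server comp time}, we get $N = \mathcal{O}(K^2) = \mathcal{O}(n^{2\varepsilon})$. Combining with the previous step, $q = \mathcal{O}(n^{2\varepsilon})$, hence
\begin{equation*}
\log(q) = \mathcal{O}(2\varepsilon \log n) = \mathcal{O}(\log n),
\end{equation*}
since $\varepsilon \le 1$ is a constant.

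Third, I would apply the hypothesis $\mathcal{Z}_q = \mathcal{O}(\log(q)^\gamma)$. Raising the previous displayed bound to the $\gamma$-th power yields $\log(q)^\gamma = \mathcal{O}(\log(n)^\gamma)$, so $\mathcal{Z}_q = \mathcal{O}(\log(n)^\gamma)$, which is the claim. I do not expect a serious obstacle here: the only subtle point is justifying that one can always choose $q$ no larger than a constant multiple of $N+1$, which follows from Bertrand's postulate (or from just picking any prime power in the right range); everything else is mechanical substitution into the setting of Proposition~\ref{pro: GASP time example}.
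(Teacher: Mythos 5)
There is a genuine gap in your first step: you assert that for GASP codes it suffices to have $N$ distinct evaluation points, i.e.\ $q \ge N+1$. That is the condition for an ordinary Vandermonde matrix, but GASP decoding requires inverting a \emph{generalized} Vandermonde matrix whose exponents come from the degree table, and distinctness of the evaluation points does not guarantee its invertibility. The paper instead invokes the existence argument of Lemma~2 of \cite{9004505}, which requires the much larger field size $q > \left(2\binom{N}{T}+1\right)\cdot J$, where $J$ is the sum of the exponents appearing in $h(x)$. The paper then bounds $J = \mathcal{O}(K^2(L+T)^2)$ and $\binom{N}{T} = \mathcal{O}(N^T)$ (via Stirling), arriving at a sufficient field size of $\mathcal{O}(n^{2\varepsilon(T+2)})$. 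Your claimed bound $q = \mathcal{O}(N) = \mathcal{O}(n^{2\varepsilon})$ is therefore far too optimistic, and the Bertrand's-postulate justification rests on a false structural premise about the code.

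That said, your overall strategy is the same as the paper's and the final conclusion survives: what actually matters is only that the required field size is polynomial in $n$, so that $\log(q) = \mathcal{O}(\log n)$ and raising to the $\gamma$-th power gives the claim. The correct bound $\mathcal{O}(n^{2\varepsilon(T+2)})$ is still polynomial in $n$ \emph{because $T$ is assumed constant} --- this is exactly where the hypothesis on $T$ is consumed, and it is invisible in your argument. To repair the proof you must replace the ``$N$ distinct points suffice'' claim with the actual sufficient field-size condition for generalized Vandermonde invertibility and verify that it is polynomial in $n$ under the stated parameter choices; the remaining logarithm manipulation is fine as written.
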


\begin{proof}
The proof for GASP codes in~\cite[Lemma~2]{9004505}, shows an argument for the evaluation points of $f$ and $g$ to exist if $q > \left(2\binom{N}{T}+1\right) \cdot J$.

Moreover, $J=\sum_{j \in \mathcal{J}} j$ where $\mathcal{J}$ is the set of exponents in $h(x)=\sum_{j \in \mathcal{J}} h_j x^j$ with $\#\mathcal{J} = N$. Since we use GASP codes, all entries in the degree table are between zero and \[W=2KL+(T-1)(K+1),\] so that, \[ J \le \sum_{j=0}^{W} j = \frac{W(W+1)}{2} = \mathcal{O}(W^2) = \mathcal{O}(K^2(L+T)^2) .\]

In particular, a field size larger than $\left(2\binom{N}{T}+1\right) \cdot \frac{W(W+1)}{2}$ is sufficient.

An application of the Stirling approximation ${n! \sim \sqrt{2 \pi n} (n/e)^n}$ yields
\begin{align*}
&
\binom{N}{T}
\le
\frac{N!}{T!(|N-T|)!}
=
\frac{\mathcal{O}(N!)}{\Omega(T!)\Omega((|N-T|)!)}
\\=&
\frac{\mathcal{O}(\sqrt{2 \pi N} (N/e)^N)}{\Omega(\sqrt{2 \pi T} (T/e)^T)\Omega(\sqrt{2 \pi |N-T|} (|N-T|/e)^{|N-T|})}
\\=&
\mathcal{O}\left(\frac{N^{N+\frac{1}{2}}}{\sqrt{2\pi} T^{T+\frac{1}{2}} |N-T|^{|N-T|+\frac{1}{2}}}e^{N-T-|N-T|}\right)
\\=&
\mathcal{O}\left(\frac{N^{N+\frac{1}{2}}}{\sqrt{2\pi} T^{T+\frac{1}{2}} |N-T|^{|N-T|+\frac{1}{2}}}\right),
\end{align*}
as $N-T-|N-T| \le 0$, so that
\begin{align*}
&\left(2\binom{N}{T}+1\right) \cdot \frac{W(W+1)}{2}
= \mathcal{O}\left(\binom{N}{T}W^2\right)
\\&=\mathcal{O}\left(\frac{N^{N+\frac{1}{2}}K^2(L+T)^2}{T^{T+\frac{1}{2}} |N-T|^{|N-T|+\frac{1}{2}}}\right)
\end{align*}
is a lower bound on the sufficient field size.

Using the same exemplary parameters as in Section~\ref{sec:param}, i.e., $T=$ constant, $K=L=n^{\varepsilon}$, and $N \in \mathcal{O}(K^2)$, noting that $T \le N$, this simplifies to $\mathcal{O}(n^{2\varepsilon(T+2)})$.
Using a field size in $\mathcal{O}(n^{2\varepsilon(T+2)})$ implies
\begin{align*}
\mathcal{Z}_q
= \mathcal{O}(\log(n^{2\varepsilon(T+2)})^\gamma)
= \mathcal{O}(\log(n)^\gamma).
\end{align*}
\end{proof}

We are now ready to calculate the total time complexity when implementing GASP codes.

\begin{theorem}
Assume the setting in Proposition \ref{pro: GASP time example}. Then, the total time complexity of GASP is $\mathcal{O}(n^{\max\{\varepsilon+\omega-\varepsilon\omega,2+2\varepsilon\} } \mathcal{Z}_q)$
\end{theorem}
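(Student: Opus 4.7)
The plan is to observe that the total time is, up to constant factors, the sum of the five time complexities already tabulated in Table~\ref{tab:poly optimized}: the user's query encoding, each server's computation (taken just once, since the $N$ servers work in parallel), the user's decoding, and the upload and download times. Asymptotically, a sum of finitely many positive functions is $\Theta$ of their pointwise maximum, so the total is the maximum of these five terms.

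First I would normalize the units. The communication terms carry $\mathcal{C}_q$ while the computation terms carry $\mathcal{Z}_q$, but under the standing assumption $\mathcal{C}_q = \mathcal{O}(\log(q))$ and $\mathcal{Z}_q = \mathcal{O}(\log(q)^\gamma)$ with $\gamma \ge 1$, we have $\mathcal{C}_q = \mathcal{O}(\mathcal{Z}_q)$. Hence every term in Table~\ref{tab:poly optimized} is bounded by the same-shape term with $\mathcal{Z}_q$ in place of $\mathcal{C}_q$, and we may factor a single $\mathcal{Z}_q$ out of the maximum.

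Next I would compare the five exponents in $n$:
\begin{equation*}
2+2\varepsilon,\quad 2+2\varepsilon,\quad \omega-\varepsilon(\omega-1),\quad 2+\varepsilon,\quad 2.
\end{equation*}
Since $\varepsilon \ge 0$, we have $2 \le 2+\varepsilon \le 2+2\varepsilon$, so the upload and download exponents are dominated by the encoding/decoding exponent and can be dropped from the maximum. Rewriting the server exponent as $\omega-\varepsilon(\omega-1) = \varepsilon + \omega - \varepsilon\omega$, the surviving maximum is exactly $\max\{\varepsilon+\omega-\varepsilon\omega,\ 2+2\varepsilon\}$, which yields the stated bound.

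There is no real obstacle here; the only thing to be careful about is applying the multivariate $\mathcal{O}$-notation from the notation section correctly when summing finitely many terms, and verifying (via $\gamma \ge 1$) that $\mathcal{C}_q$ can indeed be absorbed into $\mathcal{Z}_q$ so that a single multiplicative $\mathcal{Z}_q$ factor suffices.
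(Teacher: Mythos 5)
Your proposal is correct and follows essentially the same route as the paper: sum the five entries of Table~\ref{tab:poly optimized} (counting the server term once due to parallelism), absorb $\mathcal{C}_q$ into $\mathcal{Z}_q$ via $\mathcal{C}_q = \mathcal{O}(\log q) = \mathcal{O}(\mathcal{Z}_q)$, and observe that the upload and download exponents $2+\varepsilon$ and $2$ are dominated by $2+2\varepsilon$. Your explicit remark that this absorption needs $\gamma \ge 1$ is a small point of extra care the paper leaves implicit, but the argument is the same.
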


\begin{proof}
We begin by noting that since $\mathcal{C}_q = \mathcal{O}(\log(q))$, it follows that $\mathcal{C}_q = \mathcal{O}(\mathcal{Z}_q)$.



Since all servers perform their computations in parallel, the total time complexity, $\mathcal{T}$, is the sum of the time complexities in Table \ref{tab:poly optimized},

\begin{align*}
\mathcal{T} &= \mathcal{O}((K^2 n^2 + K^2 n^2 + \tfrac{n^\omega}{K^{\omega-1}})\mathcal{Z}_q + (K n^2 + n^2)\mathcal{C}_q)
\\&=
\mathcal{O}(n^{\max\{\varepsilon+\omega-\varepsilon\omega,2+2\varepsilon\} } \mathcal{Z}_q)
.
\end{align*}
\end{proof}

The parameter $\varepsilon$ controls the trade-off between computational costs at the client ($\mathcal{O}(n^{2+2\varepsilon}\mathcal{Z}_q)$) versus computational costs at each of the servers ($\mathcal{O}(n^{\varepsilon+\omega-\varepsilon\omega}\mathcal{Z}_q)$). This trade-off, shown in Figure \ref{fig:Fa}, is linear in the exponents. By choosing $\varepsilon$ carefully we can bound the total time complexity, as shown in Figure \ref{fig:Fb}.

\begin{corollary} \label{cor: gasp}
Assume the setting in Proposition \ref{pro: GASP time example}. The minimum total time complexity for GASP is $\mathcal{O}(n^{4-\frac{6}{\omega+1}}\mathcal{Z}_q)$ for $\varepsilon=\frac{\omega-2}{\omega+1}$.
\end{corollary}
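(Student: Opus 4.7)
The plan is to directly minimize the exponent in the bound $\mathcal{T} = \mathcal{O}(n^{\max\{\varepsilon+\omega-\varepsilon\omega,\, 2+2\varepsilon\}} \mathcal{Z}_q)$ supplied by the preceding theorem, treating it as a one-variable optimization in $\varepsilon \in [0,1]$.

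First I would observe that the two exponents inside the $\max$ have opposing monotonicities in $\varepsilon$. The server-side exponent $f_1(\varepsilon) = \varepsilon + \omega - \varepsilon\omega = \omega - (\omega-1)\varepsilon$ is \emph{strictly decreasing} in $\varepsilon$ (since $\omega \ge 2 > 1$), reflecting the fact that larger partitions give each server smaller matrices. The user-side exponent $f_2(\varepsilon) = 2 + 2\varepsilon$ is \emph{strictly increasing} in $\varepsilon$, reflecting that larger partitions force more encoding and decoding work. Consequently the pointwise max $\varepsilon \mapsto \max\{f_1(\varepsilon), f_2(\varepsilon)\}$ is a piecewise linear convex function whose unique minimum is attained at the crossing point $f_1(\varepsilon) = f_2(\varepsilon)$.

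Next I would solve the equation $\omega - (\omega-1)\varepsilon = 2 + 2\varepsilon$ for $\varepsilon$, which rearranges to $\omega - 2 = (\omega+1)\varepsilon$, yielding the claimed optimizer $\varepsilon^\ast = \frac{\omega-2}{\omega+1}$. A quick sanity check confirms $\varepsilon^\ast \in [0,1]$ for $\omega \in [2,3]$ (and more generally $\varepsilon^\ast \ge 0$ for any $\omega \ge 2$), so the optimizer is admissible. Substituting $\varepsilon^\ast$ into $f_2$ gives
\begin{equation*}
2 + 2 \cdot \frac{\omega-2}{\omega+1} = \frac{2(\omega+1) + 2(\omega-2)}{\omega+1} = \frac{4\omega - 2}{\omega+1} = 4 - \frac{6}{\omega+1},
\end{equation*}
which is exactly the exponent claimed in the statement; the same value is of course obtained by substituting into $f_1$.

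I don't anticipate any genuine obstacle here, since the argument is a one-line convex optimization once the theorem's bound is in hand. The only thing worth being careful about is verifying that $\varepsilon^\ast$ lies in the interval $[0,1]$ assumed in the setup of Section \ref{sec:param}; this holds whenever $2 \le \omega \le 3$ (which covers the standard and all currently known fast matrix multiplication exponents), and one can note in passing that for the theoretical lower bound $\omega = 2$ one obtains $\varepsilon^\ast = 0$ and total exponent $2$, while for $\omega = 3$ one obtains $\varepsilon^\ast = 1/4$ and total exponent $5/2$, consistent with Figures~\ref{fig:Fa} and~\ref{fig:Fb}.
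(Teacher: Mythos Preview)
Your proposal is correct and matches the paper's intended argument: the corollary is stated without explicit proof, immediately after the theorem giving the bound $\mathcal{O}(n^{\max\{\varepsilon+\omega-\varepsilon\omega,\,2+2\varepsilon\}}\mathcal{Z}_q)$, and the surrounding text makes clear one simply balances the two exponents. Your derivation of the crossing point $\varepsilon^\ast=\tfrac{\omega-2}{\omega+1}$ and the resulting exponent $4-\tfrac{6}{\omega+1}$ is exactly this optimization carried out in full.
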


Thus, by using GASP codes, the user can perform the matrix multiplication in time $\mathcal{O}(n^{4-\frac{6}{\omega+1}}\mathcal{Z}_q)$ as opposed to the $\mathcal{O}(n^\omega)$ time it would take to do locally. Note here that since ${\mathcal{Z}_q = \mathcal{O}({\log(n)}^\gamma)}$, this is always an improvement. Also, if the user uses $\mathbb{F}_q$ as the base field, i.e. for very large fields, then $\mathcal{Z}_q$ can be taken to be constant.

Finally, we note that the analysis trivially holds for polynomial codes that outperform GASP codes since all results were proven via upper bounds. More so, since our analysis is done using asymptotic notation, the improvements would have to be by more than just constants to obtain better results.

\bibliography{main.bib}
\bibliographystyle{IEEEtran}

\end{document}